\def\BState{\State\hskip-\ALG@thistlm}
\DeclareMathOperator*{\argmax}{arg\,max}
\newtheorem{proposition}{Proposition}
\title{An Online Pricing Mechanism for Electric Vehicle Parking Assignment and Charge Scheduling}
\author{Nathaniel Tucker $\quad$ Bryce Ferguson $\quad$ Mahnoosh Alizadeh\\% <-this % stops a space
\thanks{This work was supported by the California Energy Commission from an award administered through SLAC National Accelerator Laboratory. Solicitation: GFO-16-303. Agreement: EPC-16-057.}% <-this % stops a space
\thanks{N. Tucker, B. Ferguson, and M. Alizadeh are with the Department of Electrical and Computer Engineering, University of California, Santa Barbara. \texttt{nathaniel\char`_tucker@ucsb.edu.}}%
}
\begin{document}

\maketitle
\thispagestyle{empty}
\pagestyle{empty}

%%%%%%%%%%%%%%%%%%%%%%%%%%%%%%%%%%%%%%%%%%%%%%%%%%%%%%%%%%%%%%%%%%%%%%%%%%%%%%%%
\begin{abstract}

In this paper, we design a pricing framework for online electric vehicle (EV) parking assignment and charge scheduling. Here, users with electric vehicles want to park and charge at electric-vehicle-supply-equipment (EVSEs) at different locations and arrive/depart throughout the day. The goal is to assign and schedule users to the available EVSEs while maximizing user utility and minimizing operational costs. Our formulation can accommodate multiple locations, limited resources, operational costs, as well as variable arrival patterns. With this formulation, the parking facility management can optimize for behind-the-meter solar integration and reduce costs due to procuring electricity from the grid. We use an online pricing mechanism to approximate the EVSE reservation problem's solution and we analyze the performance compared to the offline solution. Our numerical simulation validates the performance of the EVSE reservation system in a downtown area with multiple parking locations equipped with EVSEs. \\ \hfill

\end{abstract}

%%%%%%%%%%%%%%%%%%%%%%%%%%%%%%%%%%%%%%%%%%%%%%%%%%%%%%%%%%%%%%%%%%%%%%%%%%%%%%%%
\section{INTRODUCTION}

\label{section: intro}

Owners of electric vehicles (EVs) spend a large portion of their average day at work and at home; however, an overlooked third category also contributes to an EV owner's day. The average EV owner spends $1.5 - 4$ hours per day at locations such as shopping centers, travel stops, and restaurants, meaning there is potential for EV charging at these locations \cite{workplace}. However, these miscellaneous locations have highly variable statistics such as arrival time, departure time, and energy requirement that create challenges for EV assignment and charge scheduling. In this paper, we present an EV parking assignment and charge scheduling framework that does not need accurate input statistics with the purpose of increasing smart-charging opportunities at various locations.

A number of past studies have proposed online mechanisms for assigning EVs to electric-vehicle-supply-equipment (EVSEs) as well as scheduling EV charging in geographically limited areas such as parking lots or neighborhoods. Paper \cite{Robu} presents an online mechanism for EV charging with electricity as an expiring resource; however, they allow cancellation of previously allocated resources. Similarly, \cite{Zheng} presents an online algorithm for EV charge scheduling with revocation, meaning that allocations can be cancelled in order to serve new requests with higher valuations. In \cite{auc2charge}, the authors present a budget scaling online auction framework that allows users to update their bids while parked. Paper \cite{Stein} uses a modified consensus algorithm to share electricity between EVs while considering electricity as a perishable and continuously produced resource. Furthermore, \cite{Honarmand2014OptimalSO} considers an intelligent parking lot that maximizes the charge rate of all EVs while taking into account time-varying electricity prices. Paper \cite{bryce} considers a workplace parking structure and presents centralized assignment heuristics  for  EVs  to  EVSEs. Papers \cite{driz_cdc, menu, differentiated, driz_arxiv} propose alternative pricing schemes for EV charging in various related settings. In our previous work \cite{ntucker_allerton}, posted pricing mechanisms were examined for assigning EVs to EVSEs with the goal of maximizing smart charging. However, all users of the system were assumed to arrive in a small time interval every morning instead of arriving in an online fashion throughout the day. 

With the exception of \cite{bryce,ntucker_allerton,SOMC}, most previous work focuses on traditional single-output-single-cable (SOSC) EVSEs which result in large fractions of potential charging time spent idle. To increase smart-charging capabilities and user service, we assume that the destinations are equipped with single-output-multiple-cable (SOMC) EVSEs that can be connected to multiple EVs but only charge one EV at a time \cite{SOMC}. These SOMC EVSEs enable facility management to devise a smart charging plan each day, while satisfying the charging needs of all EVs.

In this paper, we present a framework for the online EVSE reservation problem that accounts for users arriving and departing throughout the day. Our formulation can accommodate many locations, limited resources, operational costs, as well as variable arrival patterns. Additionally, our framework does not revoke previous allocations. With this problem formulation, the parking facility management can optimize for behind-the-meter solar integration and reduce costs due to procuring electricity from the grid. We use an online pricing mechanism to approximate the EVSE reservation problem's solution and we analyze the performance compared to the offline solution. %Additional pricing mechanism results have been submitted in a journal paper \cite{ntucker_TSG}.

The remainder of the paper is organized as follows. Section \ref{section: prob_form} presents the system structure and offline formulation for the EVSE reservation problem. Section \ref{section: Online_Mech} presents the online mechanism used to provide an approximate solution to the EVSE reservation problem and discusses the online mechanism's performance guarantee. Section \ref{section:simulation} presents numerical results to validate the performance of the online mechanism.

\section{PROBLEM MODEL}
\label{section: prob_form}

\subsection{EVSE Reservation System and Auction}

In this section, we describe the attributes of the EVSE reservation system as well as the users of the system.  We consider parking facilities at $L$ different locations where arriving EV owners can park and charge their vehicles. The locations are dispersed and EV owners have preference to park at locations close to the desired destination for their visit. Moreover, at each location $l\in \mathcal{L}$ there are $M_{l}$ SOMC EVSEs available. Since SOMC EVSEs allow EVs to be plugged in but not receive charge, users do not have to remove their vehicle until the end of their visit. Each of the $M_l$ SOMC EVSEs at location $l$ are equipped with $C_l$ cables allowing $M_l C_l$ EVs to be parked within the location per time slot $t=1,\dots,T$. The EVSEs are constrained by the power limitations of the hardware and are limited to a maximum charge rate of $E_l$ units of power. 

In order to provide renewable energy for charging the  EVs, each parking location is equipped with a solar generation system. For each parking location $l$, the generated solar is time varying and we denote it as $s_l(t)\in[0,\overline{S}_l]$ where $\overline{S}_l$ is the maximum solar generation at location $l$. To optimize for behind-the-meter integration, facility management aims to use as much solar energy as possible before purchasing energy from the local distribution grid. We denote the price of energy purchased from the grid to serve location $l$ at time $t$ as $\pi_{l}(t)$. Each parking location $l$ can procure up to $G_l(t)$ units of energy at any time $t$ (e.g., due to a transformer limit).

The visiting EV owners arrive and depart throughout the day and request charge at various locations. There are $N$ total users participating in the EVSE auction each day and each user $n\in\mathcal{N}$ can be characterized by user `type':
\begin{equation}
\theta_n = \{ t_n^-, t_n^+, h_n, \{l_n\}, \{v_{nl}\} \} \in \Theta,
\end{equation}
where $\Theta$ is the type space of all possible users. The parameters are defined as follows. Suppose user $n$ wants to charge her EV and submits her reservation request at time $t_n$. By submitting a reservation request, user $n$ commits to arriving at any of her desired locations $\{l_n\}$ at time $t_n^-$ and leaving at $t_n^+$. During this time period, she requests that her EV receives $h_n$ %\leq t_n^+ - t_n^-$
units of energy. The last component of $\theta_n$ is $\{v_{nl}\}$, which represents the values user $n$ obtains if her EV is assigned to receive charging at location $l$. We make the assumption users receive non-negative value by charging their EVs, i.e., $v_{nl} \geq 0$. 

\subsection{Offline EVSE Reservation Problem}

When making a request to park and charge, each user $n$ submits her user type $\theta_n$. The EVSE reservation system uses the information in $\theta_n$ to generate the set of feasible assignments and charging schedules (we denote these as options $\mathcal{O}_n$) that fulfill the user's demands. Each option $o\in \mathcal{O}_n$ for user $n$, corresponds to a cable reservation and a charging schedule at an EVSE. We let $c_{no}^{ml}(t)$ denote the cable reservation request in option $o$ of user $n$ at EVSE $m$ at location $l$. We restrict $c_{no}^{ml}(t)$ to be either $0$, meaning no cable is requested at time $t$, or $1$, meaning user $n$ wants a cable reservation at $t$ in option $o$ at EVSE $m$ at location $l$. Similarly, we denote $e_{no}^{ml}(t)$ as the energy schedule for user $n$ in option $o$. The energy schedule, $e_{no}^{ml}(t)$, allows facility management to customize  when each EV will receive charge and when they will be idle at the EVSE. %The energy schedule can only take a non-zero value at time $t$ if the EV has access to a cable at time $t$. 
We allow $e_{no}^{ml}(t)$ to take different values (from a discrete set) over the usage period $t\in [t_n^-,t_n^+]$, which allows each option $o\in\mathcal{O}_n$ to request different amounts of energy at each time, as long as the user's total demands are met. 

%Additionally, each user $n$ provides the location $k\in\mathcal{L}$ where she would like to charge her EV and her willingness-to-pay denoted $v_n$. 
With this notation, the generated options for users' reservation requests (bids) can be expressed as:
\begin{equation}
B_n = \{ t_n^-, t_n^+, \{ c_{no}^{ml}(t) \}, \{ e_{no}^{ml}(t) \}, \{l_n\}, \{v_{nl}\} \}.
\end{equation}
After generating each bid package, the reservation system decides whether to accept it as well as selecting which option should fulfill the request if accepted. The binary variable $x_{no}^{ml}$ is set to 1 if option $o$ of user $n$ is accepted at EVSE $m$ at location $l$ and 0 otherwise. The reservation system also computes a payment $\hat{p}_{no}^{ml}$ for each user $n$ to pay if option $o$ in their bid package is chosen. If a user is not admitted into the EVSE reservation system, her utility is set to zero and she parks at an auxiliary parking lot without EVSEs.

The cable and energy demands at EVSE $m$ at location $l$ are denoted $y_{c}^{ml}(t)$ and $y_{e}^{ml}(t)$, respectively and are given by:
\begin{align}
	\label{eqn:offline cable demand}
	& y_{c}^{ml}(t)=\sum_{\mathcal{N},\mathcal{O}_n} c_{no}^{ml}(t) x_{no}^{ml},\\
	\label{eqn:offline energy demand}
	& y_{e}^{ml}(t)=\sum_{\mathcal{N},\mathcal{O}_n} e_{no}^{ml}(t) x_{no}^{ml}.
\end{align}
Additionally, each parking area $l$ has to generate or procure the energy needed to satisfy all the EVSE's charge schedules. As such, we denote $y_{g}^{l}(t)$ as the total energy procurement demand for location $l$. The total energy procurement demand can be calculated as follows:
\begin{align}
	\label{eqn:offline gen demand}
	& y_{g}^{l}(t)=\sum_{\mathcal{N},\mathcal{O}_n,\mathcal{M}_l} e_{no}^{ml}(t) x_{no}^{ml}.
\end{align}
% For the SOMC EVSE cables, location $l$ does not incur any cost from allocating these cables; however, there is a limited supply. We represent this as a zero-infinity cost function (cost becomes infinity when capacity is reached). The operational cost due to cables at EVSE $m$ at parking at location $l$ is denoted as
% \begin{equation}
% \label{eq:cable cost}
%     f_{c}^{ml}(y_{c}^{ml}(t)) = \begin{cases}
%         0 & y_{c}^{ml}(t) \in [0,C_l] \\
%         +\infty &y_{c}^{ml}(t) > C_l.
%         \end{cases}
% \end{equation}
% For the EVSE energy, we have
% \begin{equation}
% \label{eq:energy cost}
%     f_{e}^{ml}(y_{e}^{ml}(t)) = \begin{cases}
%         0 & y_{e}^{ml}(t) \in [0,E_l] \\
%         +\infty &y_{e}^{ml}(t) > E_l.
%         \end{cases}
% \end{equation}

Each location $l$ has an operational cost function due to the total amount of energy needed to satisfy all the admitted EVs. For the energy procurement at location $l$, we have the following operational cost function:
\begin{align}
	\label{eq:generation cost}
	&f_{g}^{l}(y_{g}^{l}(t)) =  \nonumber\\
	&\begin{cases}
		0 & y_g^{l}(t) \in [0,s_l(t)]\\
		\pi_l(t) (y_{g}^{l}(t)-s_l(t)) & y_g^{l}(t) \in \big(s_l(t),s_l(t)+G_l(t)] \\
		+\infty & y_{g}^{l}(t) > s_l(t)+G_l(t).
	\end{cases}
\end{align}
Equation \eqref{eq:generation cost} represents the cost to produce the energy needed in the whole parking location at each time slot. While the demand is less than the available solar, the operational cost is zero. Once the demand exceeds the available solar generation, energy is bought from the grid until the transformer limit $G_l(t)$ is reached. After this point, no more energy can be procured and the operational cost is set to infinity. %Figure \ref{fig: Model} shows an example for one parking location and four arriving EVs.
The goal of the EVSE reservation system is to assign and schedule users to the available EVSEs to maximize social welfare. If all the information of the $N$ requests within the time span $[0,T]$ is known in advance (assuming truthful user valuations), we can write the following offline social welfare maximization problem for assigning EVs to EVSEs and determining their charging plan:
\begin{subequations}
	\begin{align}
		\label{eqn:offline obj}
		&\max_{x} \sum_{\mathcal{N},\mathcal{O}_n,\mathcal{L},\mathcal{M}_l} v_{nl} x_{no}^{ml}-\sum_{\mathcal{T}, \mathcal{L}}    f_{g}^{l}(y_{g}^{l}(t)) \\ \nonumber
		% & \hspace{20}-\sum_{\mathcal{T}, \mathcal{L}, \mathcal{M}_l} f_{e}^{ml}(y_{e}^{ml}(t)) \\\nonumber
		% & \hspace{20}-\sum_{\mathcal{T}, \mathcal{L}}    f_{g}^{l}(y_{g}^{l}(t)) \\\nonumber
		&\textrm{ subject to:}\nonumber\\
		\label{eqn:offline one option}
		&\sum_{\mathcal{O}_n, \mathcal{L},\mathcal{M}_{l}} x_{no}^{ml} \leq 1, \quad\forall\; n\\
		&\label{eqn:offline integer}\hspace{1pt} x_{no}^{ml}  \in \{ 0,1 \}, \hspace{11pt}\forall \; n, o, l, m\\
		&\label{eqn:offline cable lim}\hspace{1pt}y_{c}^{ml}(t) \leq C_l, \hspace{13pt} \forall \;l, m, t\\
		&\label{eqn:offline energy lim}\hspace{1pt}y_{e}^{ml}(t) \leq E_l, \hspace{13pt} \forall \;l, m, t\\
		& \nonumber \textrm{ and }\eqref{eqn:offline cable demand}, \eqref{eqn:offline energy demand}, \eqref{eqn:offline gen demand}.
		%&    \label{eq: demand geq 0}y_{r}^{ml}(t)  \geq 0, \\*
		%&\nonumber\hspace{49}\forall r\in\{c,e,g\}, m\in\mathcal{M}_{l},     l\in\mathcal{L}, t\in[0,T].
	\end{align}
\end{subequations}
\noindent Here, the objective \eqref{eqn:offline obj} is to maximize the total welfare of all the users minus the operational costs. Constraint \eqref{eqn:offline one option} ensures that at most one option is selected for each user. Constraint \eqref{eqn:offline integer} is an integer constraint on the assignment variable. Constraints \eqref{eqn:offline cable lim} and \eqref{eqn:offline energy lim} ensure that the total allocation resource demands do not exceed capacities. Equations \eqref{eqn:offline cable demand}-\eqref{eqn:offline gen demand} sum up the resource demand at each EVSE $m$ and location $l$.
If the integrality constraint \eqref{eqn:offline integer} is relaxed to $x_{no}^{ml} \geq 0$ (constraint \eqref{eqn:offline one option} ensures $x_{no}^{ml}\leq 1$), we can find the Fenchel dual of \eqref{eqn:offline obj}-\eqref{eqn:offline energy lim}. We set $u_n$ and $p_{c}^{ml}(t)$, $p_{e}^{ml}(t)$, $p_{g}^{l}(t)$ as the dual variables for constraint \eqref{eqn:offline one option} and \eqref{eqn:offline cable demand}, \eqref{eqn:offline energy demand}, \eqref{eqn:offline gen demand}, respectively. In the following, the so-called Fenchel conjugate of a function $f(.)$ is defined as:
\begin{equation}
f^*(p(t)) = \sup_{y(t)\geq0} \big\{ p(t)y(t) - f(y(t)) \big\}.
\end{equation} 
The Fenchel dual of \eqref{eqn:offline obj}-\eqref{eqn:offline energy lim} can be written:
% \vspace{5}
% \setlength{\abovedisplayskip}{0pt}
% \setlength{\belowdisplayskip}{0pt}
\begin{subequations}
	\begin{alignat}{3}
		\label{eqn:dual obj}
		&\min_{u,p} \sum_{\mathcal{N}} u_n + \sum_{\mathcal{T}, \mathcal{L}}    f_{g}^{l*}(p_{g}^{l}(t))\\* 
		&\nonumber+ \sum_{\mathcal{T}, \mathcal{L},\mathcal{M}_l} \Big( f_{c}^{ml*}(p_{c}^{ml}(t)) + f_{e}^{ml*}(p_{e}^{ml}(t)) \Big)
		%& \nonumber\hspace{51}+ \sum_{\mathcal{T}, \mathcal{L},\mathcal{M}_l} f_{e}^{ml}^*(p_{e}^{ml}(t)) \\*
		\\*
		&\nonumber \textrm{ subject to:}\\*
		\label{eqn:dual user utility}
		&\;u_n \geq \;v_{nl} 
		- \sum_{\mathcal{T}} \Big( c_{no}^{ml}(t)p_{c}^{ml}(t)\\*
		&\hspace{42pt}+e_{no}^{ml}(t)\big(p_{e}^{ml}(t)+p_{g}^{l}(t)\big)\Big)\hspace{10pt}\forall\;n,o,l,m\nonumber\\*
		\label{eqn:dual ui}
		&u_n \geq \;0, \hspace{90pt}\forall \;n\\*
		\label{eqn:dual pj}
		&p_{c}^{ml}(t),\;p_{e}^{ml}(t),\;p_{g}^{l}(t) \geq \;0,  \hspace{12pt}\forall \;l,m,t,
	\end{alignat}
\end{subequations}
\noindent where $f^*(p(t))$ is the Fenchel conjugate for the limited resources' dual variables. The Fenchel conjugates for the capacity constraints are as follows:
\begin{align}
	\label{eq:fenchel cable cost}
	&f^{ml*}_{c}(p^{ml}_{c}(t)) = p^{ml}_{c}(t)C_l, \hspace{24pt} p^{ml}_{c}(t) \geq 0\\
	\label{eq:fenchel energy cost}
	&f^{ml*}_{e}(p^{ml}_{e}(t)) = p^{ml}_{e}(t)E_l, \hspace{24pt} p^{ml}_{e}(t) \geq 0.
\end{align}
The Fenchel conjugate for the energy procurement operational cost function is as follows:
% \begin{align}
% \label{eq:fenchel gen cost}
%     &f^{l*}_{g}(p^{l}_{g}(t)) =  \\*
%     &\nonumber\begin{cases}
%         s_l(t)p^{l}_{g}(t), &p^{l}_{g}(t) \leq \frac{(s_l(t)+G_l(t))\pi_l(t)}{G_l(t)} \\
%         (s_l(t)+G_l(t))(p^{l}_g(t)-\pi_l(t)) & p^{l}_{g}(t) > \frac{(s_l(t)+G_l(t))\pi_l(t)}{G_l(t)}. \\
%     \end{cases}
% \end{align}
\begin{align}
	\label{eq:fenchel gen cost}
	&f^{l*}_{g}(p^{l}_{g}(t)) =  \\*
	&\nonumber\begin{cases}
		s_l(t)p^{l}_{g}(t), &p^{l}_{g}(t) < \pi_l(t) \\
		(s_l(t)+G_l(t))p^{l}_g(t)-G_l(t)\pi_l(t) & p^{l}_{g}(t) \geq \pi_l(t). \\
	\end{cases}
\end{align}

\subsection{Admittance, Rejection, and Allocation Decisions}

In this section, we discuss how the EVSE system determines whether to accept or reject as well as how to allocate user $n$ if accepted. The EVSE reservation system assigns $x_{no}^{ml}=1$ for some option $o\in\mathcal{O}_n$ if user $n$ is accepted into the reservation system. For each user, the KKT conditions for constraint \eqref{eqn:dual user utility} in the offline dual problem indicate whether or not a user should be admitted into the system. In the offline solution, $u_n$ will be zero unless constraint \eqref{eqn:dual user utility} is tight for some $m\in\mathcal{M}_{l}$, $l\in\mathcal{L}$ and $o\in\mathcal{O}_n$. The reservation system solves the following equation to calculate user $n$'s utility:
\begin{align}
	\label{eq: u_n}
	&u_n =\max\Big\{0,\max_{\mathcal{O}_n,\mathcal{L},\mathcal{M}_l} \big\{v_{nl} 
	\\*\nonumber&- \sum_{t\in[t_n^-,t_n^+]} \big( c_{no}^{ml}(t)p_{c}^{ml}(t)+e_{no}^{ml}(t)(p_{e}^{ml}(t)+p_{g}^{l}(t) )\big)\big\}\Big\}.
\end{align}
If $u_n$ returns zero, the utility of admitting user $n$ into the system is not large enough; therefore, user $n$ is denied a reservation and is sent to auxiliary parking. If $u_n$ returns a positive value, user $n$ is admitted into the reservation system with $o\in\mathcal{O}_n$ that maximizes equation \eqref{eq: u_n}.

In equation \eqref{eq: u_n}, $p_{c}^{ml}(t)$, $p_{e}^{ml}(t)$, and $p_{g}^{l}(t)$ are the marginal prices per unit of limited resource at EVSE $m$ at location $l$. As such, the payment user $n$ must pay if admitted into the system with option $o$ can be written:
\begin{align}
	\label{eq: payment}
	\hat{p}_{no}^{ml} = \sum_{t\in[t_n^-,t_n^+]} \Big( c_{no}^{ml}(t)p_{c}^{ml}(t)+e_{no}^{ml}(t)(p_{e}^{ml}(t)+p_{g}^{l}(t)) \Big).
\end{align}
Since the system is only admitting users with positive utilities, the auction provides individual rationality for all users $n\in\mathcal{N}$. Furthermore, the system is assigning each user to the option that maximizes the user's utility function (valuation minus payment) with respect to the current marginal prices. %As such, the mechanism is targeting user truthfulness as well as welfare maximization for the users and facility management.

The offline primal and dual formulations for the EVSE reservation problem in \eqref{eqn:offline obj}-\eqref{eqn:offline energy lim} and \eqref{eqn:dual obj}-\eqref{eqn:dual pj} are established assuming complete knowledge of all $N$ users over the entire time span. However, users submit reservations at random times throughout the day, prohibiting an offline approach. For example, when user $n$ arrives, a new primal variable $x_{no}^{ml}$ and dual variable $u_n$ must be assigned while still meeting the constraints. The EVSE reservation system must decide immediately whether to admit user $n$ into the parking structure. Furthermore, if user $n$ is accepted, the EVSE reservation system decides which option will fulfill the request and calculates the user's payment, which cannot be revoked or modified later. 

In the following, we discuss an online pricing solution based on \eqref{eqn:offline obj}-\eqref{eqn:offline energy lim} and \eqref{eqn:dual obj}-\eqref{eqn:dual pj} in order to solve the EVSE reservation problem and determine users' payments in an online fashion.

\section{Online Pricing Mechanism}
\label{section: Online_Mech}

\subsection{Payment Design}

In the offline problem, the total demands $y(t)$ for each resource are known before solving. As such, the prices are calculated as follows: if the demand $y(t)$ is less than the capacity of a resource, set the price $p(t)$ equal to the marginal operational cost $f'(y(t))$. In this case, each user will pay for cost of their allocation. For both the EVSE cables and energy, since these resources do not have an operational cost, the marginal prices $p(t)$ are zero if the demand is below capacity. If the demand of a resource exceeds the capacity, the marginal prices act as filters to reject users with low valuations until the filtered demand matches the capacity.

However, the EVSE reservation problem requires an online solution. In this case, traditional dynamic programming based strategies that rely on input models fall short for deriving the optimal marginal prices due to intractable state-space size and potentially inaccurate statistics. Since the demands for the limited resources are not known in advance, the marginal prices must be calculated online. 

In the remainder of this section, we describe how the online EVSE reservation system calculates the optimal marginal prices on EVSE cables, energy, and generation based on a pricing heuristic, for which we provide performance guarantees. Specifically, our EVSE reservation system updates the prices $p(t)$ heuristically as the amounts of allocated resources $y(t)$ evolve, but only based on past observations. The pricing scheme has two major goals: (1) to make sure that the marginal gain in welfare from an allocation is greater than the operational cost incurred to serve the allocation, and (2) to filter out low value users early to ensure there are adequate resources for higher value users later on.

The structure of the pricing functions we use is adopted from \cite{IaaS}, where the authors present a pricing framework for data centers with limited computation resources and server costs  under an adverserial setting. For the EVSE cables, the proposed marginal payment function  is as follows:
\begin{align}
	\label{eqn:zero inf price}
	p_{c}^{ml}(y_{c}^{ml}(t)) =& \Big(\frac{L_c}{4\sum_{\mathcal{L}}(M_l + \frac{1}{2})}\Big) \\*
	&\nonumber \times\Big( \frac{4\sum_{\mathcal{L}}(M_l + \frac{1}{2})U_c}{L_c} \Big)^{\frac{y_{c}^{ml}(t)}{C_l}},
\end{align}
where $y_{c}^{ml}(t)$ is the current demand for the cables at EVSE $m$ at location $l$ at time $t$. Furthermore, $L_c$ and $U_c$ are respectively the lower and upper bounds on users' value per cable per unit of time, which are defined as:
\begin{subequations}
	\begin{align}
		& \label{eqn:Lc}L_c = \min_{\mathcal{N},\mathcal{O}_n,\mathcal{L},\mathcal{M}_l} \frac{v_{nl}}{2\sum_{\mathcal{L}}(M_l+\frac{1}{2})\sum_{t\in[t_n^-,t_n^+]}c_{no}^{ml}(t)},\\
		&\label{eqn:Uc}U_c = \max_{\mathcal{N},\mathcal{O}_n,\mathcal{L},\mathcal{M}_l,\mathcal{T}} \frac{v_{nl}}{c_{no}^{ml}(t)}, \quad {c_{no}^{ml}(t)} \neq 0.
	\end{align}
\end{subequations}
For the pricing function for EVSE energy units, we change $C_l$ to $E_l$ in the exponent of equation \eqref{eqn:zero inf price} and calculate $L_e$ and $U_e$ using $e_{no}^{nl}(t)$ in \eqref{eqn:Lc} and \eqref{eqn:Uc}. Additionally, $L_g$ and $U_g$ are the same as $L_e$ and $U_e$, respectively.
When $y_{c}^{ml}(t)=0$ we note that \eqref{eqn:zero inf price} outputs a price low enough that any user will be accepted (subject to $L_c$). As $y_{c}^{ml}(t)$ increases, the price increases exponentially. When $y_{c}^{ml}(t)$ is equal to the capacity, the marginal price is high enough to reject any user (because we assume $U_c$ is known beforehand). This ensures resource capacity constraints will always be upheld.

For the piecewise linear operational cost to procure energy in \eqref{eq:generation cost}, we propose the following pricing function:
\begin{align}
	\label{eqn: procureprice1}
	p_{g}^{l}
	(y_{g}^{l}(t)) = \pi_l(t) + \Big(\frac{L_g-\pi_l(t)}{4\sum_{\mathcal{L}}(M_l + \frac{1}{2})}\Big) \times\\*
	\nonumber \Big( \frac{4\sum_{\mathcal{L}}(M_l + \frac{1}{2})(U_g-\pi_l(t))}{L_g-\pi_l(t)} \Big)^{\frac{y_g^l(t)}{s_l(t)+G_l(t)}}.
\end{align}
The marginal pricing function \eqref{eqn: procureprice1} for electricity procurement is similar to the pricing function for the EVSE cables and energy; however, the price of electricity $\pi_l(t)$ is included in \eqref{eqn: procureprice1} to ensure each user's payment is greater than the electricity cost needed to charge their vehicle.

\subsection{Online Auction Mechanism}
In this section we describe the online EVSE reservation algorithm titled \textsc{OnlineEvseReservation} presented in Algorithm \ref{algorithm}. When each user $n$ arrives, the system first generates the possible charge schedule options $\mathcal{O}_n$ that fulfill her demands. The algorithm then decides whether to accept or reject user $n$ depending on user $n$'s potential utility gain due to her valuation and the current resource prices (line \ref{alg:utility}). If user $n$ is admitted and allocated option $o^{\star}\in\mathcal{O}_n$ at EVSE $m^{\star}\in\mathcal{M}_{l}$ at location $l^{\star}$, she is charged payment according to the total amount of cables, energy, and generation allocated and the current marginal prices. The algorithm updates the primal variables $x_{no}^{ml}$ after each acceptance and rejection. The total resource demands $y(t)$ are updated in line \ref{alg:demand update} if user $n$ is accepted into the system. Similarly, the marginal resource prices $p(t)$ are updated accordingly in line \ref{alg:price update}.

\begin{algorithm}[]
	\small
	\caption{\textsc{OnlineEvseReservation}}
	\label{algorithm}
	\begin{algorithmic}
		\STATE \textbf{Input:} $\mathcal{L}, \mathcal{M}_l,C_l, E_l, G_l, S_l, \pi_l, L_{c,e,g}, U_{c,e,g}$
		\STATE \textbf{Output:} $x, p$
	\end{algorithmic}
	\begin{algorithmic}[1]
		\STATE Define $f_g^l(y_g^l(t))$ according to \eqref{eq:generation cost} at all locations.
		\STATE Define the pricing functions $p(y(t))$ according to \eqref{eqn:zero inf price} and \eqref{eqn: procureprice1} for cables, energy, and generation at all EVSEs and locations.
		\STATE Initialize $x_{no}^{ml}=0$, $y^{ml}(t)=0$, $u_n=0$.
		\STATE Initialize prices $p(0)$ according to \eqref{eqn:zero inf price} and \eqref{eqn: procureprice1}.
		\STATE \textbf{Repeat for all $N$ users:}
		\STATE User $n$ submits $\theta_n$, generate feasible charging options $B_n$.
		\STATE Update dual variable $u_n$ according to \eqref{eq: u_n}. \label{alg:utility}
		\IF{$u_n > 0$}
		\STATE $(o^{\star},m^{\star},l^{\star}) =\argmax_{\mathcal{L}, \mathcal{M}_{l},\mathcal{O}_n}\big\{v_{nl}$ \\
		\vspace{3pt}
		\hspace{20pt} $- \sum_{t\in[t_n^-,t_n^+]} \big( c_{no}^{ml}(t)p_{c}^{ml}(t)$\\
		\vspace{3pt}
		\hspace{20pt} $ +e_{no}^{ml}(t)(p_{e}^{ml}(t)+p_{g}^{l}(t)) \big)\big\}$
		\vspace{3pt}
		\STATE $\hat{p}_{no^{\star}}^{m^{\star}l^{\star}} = \sum_{t\in[t_n^-,t_n^+]} \Big( c_{no^{\star}}^{m^{\star}l^{\star}}(t)p_{c}^{m^{\star}l^{\star}}(t)$\\*
		\hspace{20pt} $+e_{no^{\star}}^{m^{\star}l^{\star}}(t)(p_{e}^{m^{\star}l^{\star}}(t) +p_{g}^{l^{\star}}(t)) \Big)$
		\vspace{3pt}
		\STATE $x_{no^{\star}}^{m^{\star}l^{\star}}=1$ and $x_{no}^{ml}=0$ for all $(o,l,m) \neq (o^{\star},l^{\star},m^{\star})$
		\STATE Update total demand $y(t)$ for cables, energy, and generation according to \eqref{eqn:offline cable demand}-\eqref{eqn:offline gen demand}. \label{alg:demand update}
		\STATE Update marginal prices $p(t)$ for cables, energy, and generation according to \eqref{eqn:zero inf price} and \eqref{eqn: procureprice1}. \label{alg:price update}
		\ELSE
		\STATE $x_{no}^{ml}=0$, \hspace{3pt}  $\forall$ $l$, $m$, $o$.
		\ENDIF 
		\IF{$\exists o^{\star},m^{\star},l^{\star}$ and $x_{no^{\star}}^{m^{\star}l^{\star}}=1$}
		\STATE Accept user $n$ and allocate cables and energy in parking location $l^{\star}$ at EVSE $m^{\star}$.
		\STATE Charge user $n$ at $\hat{p}_{no^{\star}}^{m^{\star}l^{\star}}$.
		\ELSE
		\STATE Send user $n$ to auxiliary parking.
		\ENDIF
	\end{algorithmic}
\end{algorithm}

We compare the total social welfare resulting from the online solution to the optimal offline solution. Specifically, an online mechanism is said to be $\alpha$-competitive when the ratio of social welfare from the optimal offline solution to the social welfare from the mechanism is bounded by $\alpha$. We extend the competitive ratio result from \cite{IaaS} in Proposition \ref{piecewise linear price}. We note that the analysis used for this competitive ratio assumes each user's resource demands $c_{no}^{ml}(t)$ and $e_{no}^{nl}(t)$ are much smaller than the capacity limits, $C_l$ and $E_l$, respectively. This ensures no user purchases too large of a fraction of the total available resources.
\begin{proposition}
	\label{piecewise linear price}
	The marginal pricing function \eqref{eqn: procureprice1} is $\alpha_1$-competitive in social welfare when selling limited resources with the piecewise linear operational cost in \eqref{eq:generation cost} where
	\begin{align}
		\nonumber
		\alpha_1= 2\max_{\mathcal{L},\mathcal{T}}\Big\{\ln{\Big(\frac{4\sum_{\mathcal{L}}(M_l + \frac{1}{2})(U_g-\pi_l(t))}{L_g-\pi_l(t)}\Big)}\Big\}.
	\end{align}
\end{proposition}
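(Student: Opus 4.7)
The plan is to run an online primal-dual competitive analysis in the style of the cited IaaS framework \cite{IaaS}. The algorithm \textsc{OnlineEvseReservation} implicitly constructs a feasible dual solution: the resource prices $p_c^{ml}(t),\,p_e^{ml}(t),\,p_g^l(t)$ together with the utilities $u_n$ produced by \eqref{eq: u_n}. At the same time, the admission decisions $x_{no}^{ml}$ form an integral primal solution whose value is the online social welfare. If I can prove that (i) this dual is feasible for \eqref{eqn:dual obj}--\eqref{eqn:dual pj} after every arrival and (ii) at each arrival the increment in the dual objective is at most $\alpha_1$ times the increment in the primal objective, then weak LP duality gives $\mathrm{OPT}\leq D_{\text{online}}\leq \alpha_1\cdot\mathrm{ONLINE}$, which is exactly the claim.

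Dual feasibility is built into the algorithm. By the definition in \eqref{eq: u_n}, $u_n$ equals the maximum over options of $v_{nl}$ minus the total marginal payment, truncated at zero, so constraint \eqref{eqn:dual user utility} holds tightly for the chosen $(o^{\star},l^{\star},m^{\star})$ and with slack elsewhere, while the non-negativity constraints \eqref{eqn:dual ui}--\eqref{eqn:dual pj} are immediate. Moreover, the exponential form of \eqref{eqn: procureprice1} makes the price equal to $U_g$ exactly when $y_g^l(t)$ hits the cap $s_l(t)+G_l(t)$, and $U_g$ exceeds every admissible per-unit value, so capacity limits are never violated; an analogous observation for \eqref{eqn:zero inf price} handles cables and energy.

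The substance is the per-arrival inequality $\Delta D\le\alpha_1\,\Delta P$. The cables and energy streams use the pure exponential pricing rule from \cite{IaaS}, so their contribution is absorbed by the competitive argument already developed there. The new piece is the generation term, whose Fenchel conjugate \eqref{eq:fenchel gen cost} inherits a ``shift by $\pi_l(t)$'' from the cost function \eqref{eq:generation cost}. Under the assumption that each user's resource demand is small compared with the corresponding capacity, the target inequality reduces in differential form to
\begin{equation*}
\frac{d\,f_g^{l*}\!\bigl(p_g^l(y)\bigr)}{dy}\;\leq\;\alpha_1\bigl(p_g^l(y)-\pi_l(t)\bigr),
\end{equation*}
which I plan to verify by substituting \eqref{eq:fenchel gen cost} into the closed-form \eqref{eqn: procureprice1}; the exponent $y/(s_l(t)+G_l(t))$ and the leading constants in the pricing rule are engineered precisely so that the logarithmic expression $\alpha_1=2\max_{\mathcal{L},\mathcal{T}}\ln\!\bigl(4\sum_{\mathcal{L}}(M_l+\tfrac{1}{2})(U_g-\pi_l(t))/(L_g-\pi_l(t))\bigr)$ validates the bound.

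The hard part will be the discretization step: dual and primal increments are finite jumps rather than differentials, so one must integrate the pointwise inequality over $[y_{\mathrm{old}},y_{\mathrm{new}}]$ and show that the small-allocation assumption renders the discretization error negligible. A secondary difficulty is the bookkeeping of constants: the factor $2$ in $\alpha_1$ and the $\tfrac{1}{2}$ inside the summation absorb both the user utility $u_n$ contribution and the parallel competitive losses incurred on the cable and energy constraints, which are folded into the generation bound by a union-type argument over all $L$ locations and their $M_l$ EVSEs. Getting those constants to line up precisely with the stated $\alpha_1$ is the most error-prone part of the proof.
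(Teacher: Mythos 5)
Your plan follows essentially the same route as the paper: the paper's proof also reduces the claim to verifying the \emph{Differential Allocation-Payment Relationship} from \cite{IaaS}, which in the regime $f_g^{l'}(y_g^l(t))=\pi_l(t)$, $f_g^{l*'}(p_g^l(t))=s_l(t)+G_l(t)$ is exactly your differential inequality $\tfrac{d}{dy}f_g^{l*}\bigl(p_g^l(y)\bigr)\le\alpha\bigl(p_g^l(y)-\pi_l(t)\bigr)$, and then delegates dual feasibility, the increment comparison, and the factor of $2$ to Lemma 1 and Theorem 2 of \cite{IaaS}. The only difference is that you re-derive more of that primal-dual machinery explicitly, whereas the paper cites it; the substantive verification step you defer (substituting \eqref{eqn: procureprice1} and checking the stated $\alpha_g^l(t)=\ln(\cdot)$ achieves equality) is the same one the paper asserts.
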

\begin{proof}
	In \cite{IaaS}, the authors show their pricing functions are $\alpha$-competitive in social welfare with respect to the buying and selling of limited computation resources at data centers. Specifically, the pricing functions, operational cost functions, and Fenchel conjugates for the limited resources need to satisfy the \textit{Differential Allocation-Payment Relationship} given by:
	\begin{align}
		\label{eqn:diffalloc}
		\big(p_g^l(t) - f_g^{l'}(y_g^l(t))\big) \text{d}y_{g}^{l}(t) \geq \frac{1}{\alpha_g^l(t)} f_g^{l*'}(p_g^l(t)) \text{d}p_g^l(t)
	\end{align}
	for all $l\in\mathcal{L}, t\in[0,T]$. The derivatives of the energy-procurement operational cost in \eqref{eq:generation cost} and its Fenchel conjugate \eqref{eq:fenchel gen cost} are $f_g^{l'}(y_g^l(t))$ and $f_g^{l*'}(p_g^l(t))$, respectively. %For the energy-procurement operational cost in \eqref{eq:generation cost} and its Fenchel conjugate \eqref{eq:fenchel gen cost} respectively, the following derivatives are:
	% \begin{align}
	%     &\label{f_deriv}f_g^{l}'(y_g^l(t)) = 
	%     \begin{cases}
	%         0, & y_g^l(t)\in [0, s_l(t)]\\
	%         \pi_l(t), & y_g^l(t)\in (s_l(t), s_l(t)+G_l(t)]
	%     \end{cases}\\
	%     &\nonumber\textrm{ and }\\
	%     &\label{f_deriv_conj}f_g^{l*}'(p_g^l(t)) = 
	%     \begin{cases}
	%         s_l(t), & p_g^l(t)\in[0,\frac{s_l(t)+G_l(t)}{G_l(t)}\pi_l(t)]\\
	%         s_l(t)+G_l(t), & p_g^l(t) > \frac{s_l(t)+G_l(t)}{G_l(t)}\pi_l(t).
	%     \end{cases}
	% \end{align}
	% The derivative of the proposed pricing function \eqref{eqn: procureprice1} is:
	% \begin{align}
	% \label{deriv accurate}
	%     \nonumber\text{d}p_g^l(t) = &\Big(\frac{L_g-\pi_l(t)}{2R(s_l(t)+G_l(t))}\Big)\Big(\frac{2R(U_g-\pi_l(t))}{L_g-\pi_l(t)}\Big)^{\frac{y_g^l(t)}{s_l(t)+G_l(t)}}\\*
	%     &\times\ln{\Big(\frac{2R(U_g-\pi_l(t))}{L_g-\pi_l(t)}\Big)}\text{d}y_{g}^{l}(t),
	% \end{align}
	% where $R=2\sum_{\mathcal{L}}M_l(C_l+E_l+\frac{1}{M_l})$. 
	Taking the derivative of the proposed pricing function \eqref{eqn: procureprice1} and setting $f_g^{l'}(y_g^l(t)) = \pi_l(t)$ minimizes the LHS of \eqref{eqn:diffalloc} and $f_g^{l*'}(p_g^l(t)) = s_l(t)+G_l(t)$ maximizes the RHS. As such, after inserting the derivative of \eqref{eqn: procureprice1} in \eqref{eqn:diffalloc}, we can show that the Differential Allocation-Payment Relationship holds with equality when choosing $\alpha_g^{l}(t)= \ln{\Big(\frac{4\sum_{\mathcal{L}}(M_l + \frac{1}{2})(U_g-\pi_l(t))}{L_g-\pi_l(t)}\Big)}$. Because \eqref{eqn:diffalloc} holds for the  pricing function, operational cost function, and Fenchel conjugate, the remainder of the proof follows from Lemma 1 and Theorem 2 in \cite{IaaS}.
\end{proof}
We note that the proposed pricing function \eqref{eqn: procureprice1} relies on accurate day-ahead forecasts of the solar generation $s_l(t)$ for $t=1,\dots,T$ at all locations $l\in\mathcal{L}$. If the daily forecasts for solar generation are inaccurate, there are two potential undesirable outcomes: 1) solar generation is overestimated and resources are over-allocated resulting in infeasible solutions, which our online solution should avoid at all costs; and 2) solar generation is underestimated and prices are set too high and the system rejects users that should otherwise be accepted. We analyze the case where we have a forecast of the solar generation each day in terms of a confidence interval. Specifically, the solar forecast takes the following form:
\begin{align}
	s_l(t) \in [\underline{s}_l(t) , \overline{s}_l(t)],  \hspace{10pt} \forall t=1,\dots,T,
\end{align}
where $s_l(t)$ is the actual solar generation at time $t$ and the terms $\underline{s}_l(t)$ and $\overline{s}_l(t)$ are lower and upper bounds given by the forecast, respectively. 
%We use the variable $\delta\in[0,1]$ to describe the precision of the forecasted interval as shown in the following equation:
% \begin{align}
%     \delta \geq \frac{\overline{s}_l(t) - \underline{s}_l(t)}{\overline{S}_l}, \hspace{10} \forall t=1,\dots,T.
% \end{align}
% Here, $\delta=0$ corresponds to a completely accurate forecast and $\delta =1$ is the same as not having a forecast at all.
To avoid possible infeasible allocations associated with overestimation of solar availability, we analyze the performance of pricing function \eqref{eqn: procureprice1} that conservatively uses the underestimate of the solar generation, $\underline{s}_l(t)$, in Proposition \ref{piecewise linear price estimate}.

\begin{proposition}
	\label{piecewise linear price estimate}
	The marginal pricing function \eqref{eqn: procureprice1} with an underestimate of solar generation, $\underline{s}_l(t)$, is $\alpha_2$-competitive in social welfare when selling limited resources with the operational cost in \eqref{eq:generation cost}
	where
	\begin{align}
		\nonumber
		\alpha_2= 2\max_{\mathcal{L},\mathcal{T}}\Big\{&\Big(\frac{\overline{s}_l(t)+G_l(t)}{\underline{s}_l(t)+G_l(t)}\Big)\times\\*
		&\nonumber\ln{\Big(\frac{4\sum_{\mathcal{L}}(M_l + \frac{1}{2})(U_g-\pi_l(t))}{L_g-\pi_l(t)}\Big)}\Big\}.
	\end{align}
	\begin{proof}
		Similar to Proposition \ref{piecewise linear price}, we show the pricing function, operational cost function, and Fenchel conjugate for the limited resource satisfy the Differential Allocation-Payment Relationship in \eqref{eqn:diffalloc} with underestimated solar generation amounts $\underline{s}_l(t)$.
		The derivatives of the energy-procurement operational cost in \eqref{eq:generation cost} and its Fenchel conjugate \eqref{eq:fenchel gen cost} remain the same.
		% The derivative of the proposed pricing function \eqref{eqn: procureprice1} with underestimated solar $\underline{s}_l(t)$ is given by:
		% \begin{align}
		% \label{deriv_estimate}
		%     &\text{d}p_g^l(t) = \Big(\frac{L_g-\pi_l(t)}{2R(\underline{s}_l(t) +G_l(t))}\Big)\times\\*
		%     &\nonumber\Big(\frac{2R(U_g-\pi_l(t))}{L_g-\pi_l(t)}\Big)^{\frac{y_g^l(t)}{\underline{s}_l(t)+G_l(t)}}
		%     \ln{\Big(\frac{2R(U_g-\pi_l(t))}{L_g-\pi_l(t)}\Big)}\text{d}y_{g}^{l}(t),
		% \end{align}
		% where $R=2\sum_{\mathcal{L}}M_l(C_l+E_l+\frac{1}{M_l})$. 
		Taking the derivative of the proposed pricing function \eqref{eqn: procureprice1} with underestimated solar $\underline{s}_l(t)$ and setting $f_g^{l'}(y_g^l(t)) = \pi_l(t)$ minimizes the LHS of \eqref{eqn:diffalloc} and $f_g^{l*'}(p_g^l(t)) = s_l(t)+G_l(t)$ maximizes the RHS. As such, after inserting the derivative of \eqref{eqn: procureprice1} in \eqref{eqn:diffalloc}, we can show that the Differential Allocation-Payment Relationship holds when $\alpha_g^{l}(t)=\big(\frac{\overline{s}_l(t)+G_l(t)}{\underline{s}_l(t)+G_l(t)}\big) \ln{\Big(\frac{4\sum_{\mathcal{L}}(M_l + \frac{1}{2})(U_g-\pi_l(t))}{L_g-\pi_l(t)}\Big)}$. Because \eqref{eqn:diffalloc} holds for the  pricing function with underestimated solar generation, operational cost function, and Fenchel conjugate, the remainder of the proof follows from Lemma 1 and Theorem 2 in \cite{IaaS}.
	\end{proof}
\end{proposition}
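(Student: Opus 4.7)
The plan is to mirror the structure of the proof of Proposition \ref{piecewise linear price} almost exactly, with the crucial modification that the exponent in the pricing function now uses the forecast lower bound $\underline{s}_l(t)+G_l(t)$ while the Fenchel conjugate $f_g^{l\ast}$ is still evaluated against the \emph{actual} realized capacity $s_l(t)+G_l(t)$. The goal is to verify the Differential Allocation-Payment Relationship \eqref{eqn:diffalloc} under this mismatch and read off the resulting $\alpha_g^l(t)$; once that is done, Lemma 1 and Theorem 2 of \cite{IaaS} immediately provide the competitive ratio, and $\alpha_2 = 2\max_{\mathcal{L},\mathcal{T}} \alpha_g^l(t)$ follows from the same aggregation argument as in Proposition \ref{piecewise linear price}.

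The first step is to differentiate the modified pricing function. Writing $K_l = 4\sum_{\mathcal{L}}(M_l+\tfrac{1}{2})$ and $R_l(t) = (U_g-\pi_l(t))/(L_g-\pi_l(t))$ for brevity, substitution of $\underline{s}_l(t)$ into \eqref{eqn: procureprice1} gives a pricing function whose derivative with respect to $y_g^l(t)$ equals $\bigl(p_g^l(t)-\pi_l(t)\bigr)\,\ln(K_l R_l(t))/(\underline{s}_l(t)+G_l(t))$. Inverting this relation expresses $\mathrm{d}y_g^l(t)$ in terms of $\mathrm{d}p_g^l(t)$, which is what I will plug into the LHS of \eqref{eqn:diffalloc}.

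The second step is to choose the worst-case values of $f_g^{l\prime}$ and $f_g^{l\ast\prime}$ exactly as in Proposition \ref{piecewise linear price}: setting $f_g^{l\prime}(y_g^l(t))=\pi_l(t)$ minimizes the LHS (since this is the slope in the priced regime $y_g^l(t)\in(s_l(t),s_l(t)+G_l(t)]$), and setting $f_g^{l\ast\prime}(p_g^l(t)) = s_l(t)+G_l(t)$ maximizes the RHS. The key observation that drives the appearance of the ratio $(\overline{s}_l(t)+G_l(t))/(\underline{s}_l(t)+G_l(t))$ is that the realized capacity on the RHS can be as large as $\overline{s}_l(t)+G_l(t)$ in the worst case, while the pricing curve on the LHS only ``knows'' about $\underline{s}_l(t)+G_l(t)$. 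After cancellation of the common factor $(p_g^l(t)-\pi_l(t))$, \eqref{eqn:diffalloc} reduces to the inequality $(\underline{s}_l(t)+G_l(t))/\ln(K_l R_l(t)) \geq (\overline{s}_l(t)+G_l(t))/\alpha_g^l(t)$, which is tight precisely when $\alpha_g^l(t)$ equals the bracketed expression in the statement.

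The final step is simply to invoke Lemma 1 and Theorem 2 of \cite{IaaS} verbatim, since the DAPR is the only place where the pricing structure enters their analysis. I anticipate that the main obstacle will be the bookkeeping around the worst-case choice of realized solar: one must argue carefully that feasibility is preserved (guaranteed because the exponent uses $\underline{s}_l(t)$, so the pricing curve saturates at or before demand exhausts the conservative capacity) while at the same time allowing $s_l(t)$ to float up to $\overline{s}_l(t)$ in the bound for $f_g^{l\ast\prime}$. Beyond this conceptual point, the algebra is routine and parallels the baseline proof with the single substitution $s_l(t) \mapsto \underline{s}_l(t)$ in the exponent.
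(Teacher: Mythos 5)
Your proposal is correct and follows essentially the same route as the paper's own proof: differentiate the pricing function with $\underline{s}_l(t)+G_l(t)$ in the exponent, take the worst case $f_g^{l\prime}=\pi_l(t)$ on the left and $f_g^{l\ast\prime}=s_l(t)+G_l(t)\leq \overline{s}_l(t)+G_l(t)$ on the right of \eqref{eqn:diffalloc}, read off $\alpha_g^l(t)$ from the resulting ratio, and then appeal to Lemma 1 and Theorem 2 of \cite{IaaS}. Your added remarks on where the ratio $(\overline{s}_l(t)+G_l(t))/(\underline{s}_l(t)+G_l(t))$ originates and on feasibility being preserved by the conservative exponent are consistent with, and slightly more explicit than, the paper's argument.
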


\section{Experimental Evaluation}
\label{section:simulation}
In this section, we present simulation results highlighting the performance of the EVSE reservation system. Electricity prices and solar generation data (see Figure \ref{fig: LMP}) were sourced from actual California ISO data in the Bay Area \cite{LMPweb},\cite{Solar}. 
\begin{figure}[]
	\centering
	\includegraphics[width=1.0\columnwidth]{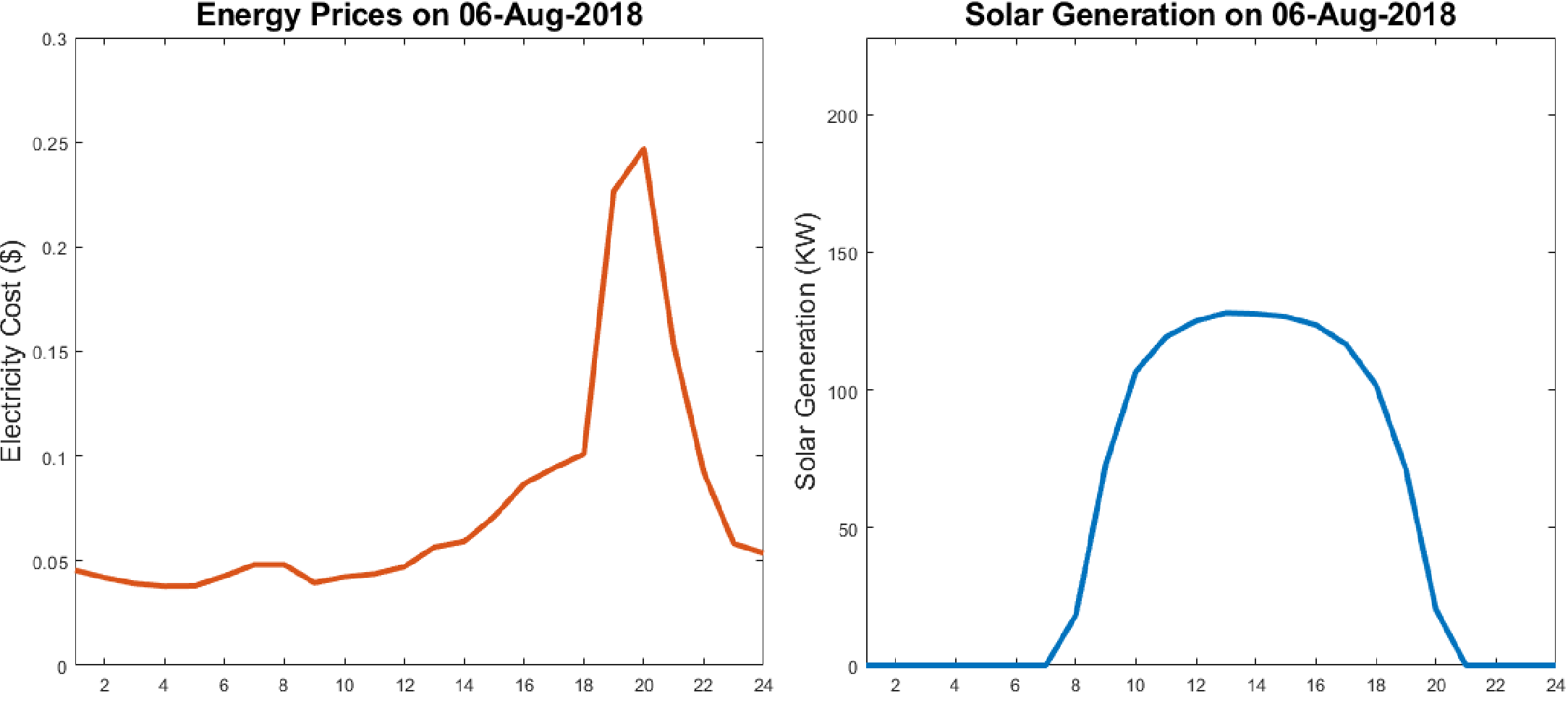}
	\caption{Left: Bay Area electricity prices. Right: Solar generation profile.}
	\label{fig: LMP}
\end{figure}
We simulated for a populated downtown area with $l=9$ different parking locations where users can park and charge their EVs. The number of EVSEs and cables available at each location are listed in Table \ref{fig: table}.
\begin{table}[]
	\small
	\begin{center}
		\begin{tabular}{||c c c c c c||} 
			\hline
			Location & $M_l$ & $C_l$ & $\substack{\text{Peak}\\\text{Generation Price}}$ & $\substack{\text{Peak}\\\text{Cable Price}}$ & $\substack{\text{Number of}\\\text{EVs Served}}$\\ [0.5ex] 
			\hline\hline
			1 & 4 & 4 & 0.247 &7.4508 &58\\ 
			\hline
			2 & 4 & 4 & 0.247 &7.4508 &67\\
			\hline
			3 & 8 & 4 & 0.730 &7.4508 &139\\
			\hline
			4 & 8 & 4 & 0.718 &7.4508 &131\\
			\hline
			5 & 2 & 4 & 0.247 &7.4508 &32\\
			\hline
			6 & 8 & 4 & 0.884 &7.4508 &128\\
			\hline
			7 & 2 & 4 & 0.247 &7.4508 &28\\
			\hline
			8 & 4 & 4 & 0.247 &7.4508 &74\\
			\hline
			9 & 2 & 4 & 0.247 &7.4508 &32\\ [0.0ex] 
			\hline
		\end{tabular}
	\end{center}
	\caption{Columns 2-3: EVSE and cable counts. Columns 4-6: Online mechanism results.}
	\label{fig: table}
\end{table}
All nine locations in the downtown area make use of the same solar generation system with maximum generation of $512$ kWh per time unit. Similarly, the area can procure energy from the grid, with a total procurement limit $G_{\mathcal{L}}(t)=512$ kWh per time unit. We simulated with $N=1000$ users with various arrival and departure times throughout the day. %displayed in Figure \ref{fig: hist}.
% \begin{figure}[h]
%     \includegraphics[width=1.\columnwidth]{images/histograms.PNG}
%     \caption{Left: Arrival time histogram. Right: Departure time histogram}
%     \label{fig: hist}
% \end{figure}
Each user arrives with three preferred parking locations with three different valuations $v_{nl}$. For the charge schedules $e_{no}^{ml}(t)$, each user was restricted to $0$ or $1$ kWh per time slot for the duration of time in the parking location. The maximum duration for a charge request was set to 8 hours and users valuations were in the interval $[\$1.50,\$7.50]$ depending on the amount of desired energy. 

In Figure \ref{fig: results} we compare the social welfare at each of the nine parking locations from the EVSE reservation system to the social welfare resulting from the case with no assignment mechanism as well as an upper bound on the optimal solution. For the no mechanism case, users arrive to the downtown area and choose the available assignment and charge schedule that maximizes their utility. For the upper bound on the optimal solution, cable and energy capacities at each EVSE were relaxed to prohibit any users from being sent to the auxiliary parking resulting in the maximum possible social welfare as long as each user's valuation was larger than the cost to serve them. We can see that the EVSE reservation system outperforms the no mechanism case by the largest amount in parking locations 3, 4, and 6. These are the locations that were most desired by the users; therefore, these locations had the most congested resources and the auction mechanism was able to filter out low value users.
\begin{figure}[t]
	\centering
	\includegraphics[width=1.0\columnwidth]{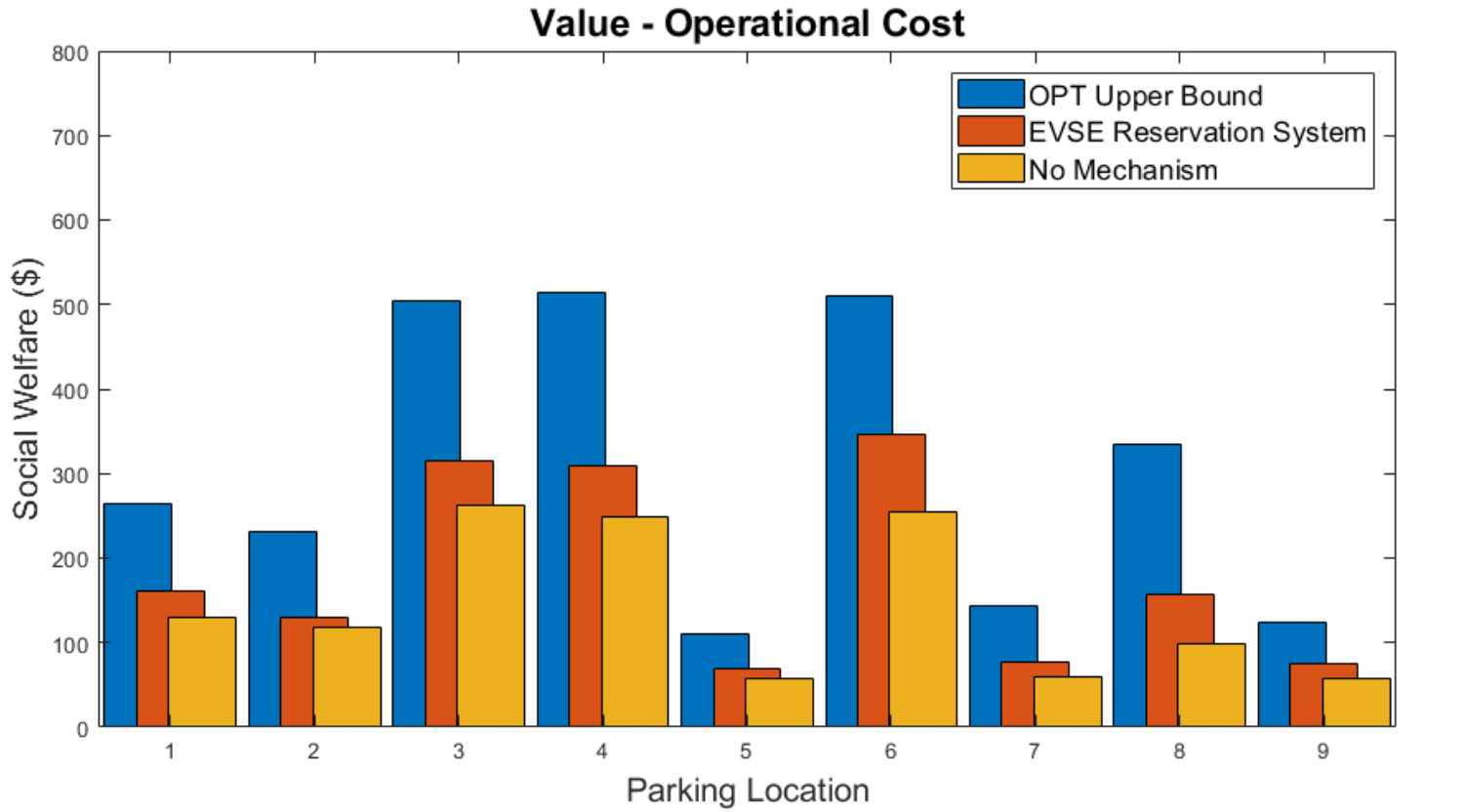}
	\caption{Social welfare comparison.}
	\label{fig: results}
\end{figure}

\section{Conclusion}
\label{section: conclusion}
In this paper, we presented a framework for the online EVSE reservation problem. Our formulation can accommodate multiple parking locations, limited resource capacities, operational costs, as well as variable arrival patterns. With this problem formulation, the parking facility management can optimize for behind-the-meter solar integration and reduce costs due to procuring electricity from the grid. We utilized an online auction mechanism to approximate the EVSE reservation problem's solution and we analyzed the performance compared to the offline solution. We provided a numerical simulation to validate the performance of the EVSE reservation system in a downtown district with multiple parking locations equipped with EVSEs. In future work, we will analyze the effects of varying levels of infrastructure investments for various locations. Additionally, we will study other smart charging benefits such as frequency regulation services or participation in demand response for parking infrastructure utilizing smart assignment mechanisms.
%\vspace{20}
% With accurate solar estimates $s_l(t)$, Algorithm 1 for the online EVSE reservation problem is $\alpha$-competitive in social welfare where $\alpha$ is defined as follows:
% \begin{align}
%     &\alpha = 2\max_{\mathcal{L},\mathcal{T}} \Big\{ \ln{ \big( \frac{2\sum_{\mathcal{L}}M_l(C_l+E_l+\frac{1}{M_l})(U_g-\pi_l(t))}{L_g-\pi_l(t)} \big) } \Big\}.
% \end{align}
% With underestimates of solar generation $\underline{s}_l(t) = (1-\delta)s_l(t), \; \forall \delta\in[0,1], \; t\in[0,T]$, Algorithm 1 for the online EVSE reservation problem is $\big(\frac{\overline{S}_l+G_l(t)}{G_l(t)}\big)\alpha$-competitive in social welfare.

%use section* for acknowledgment
%\section*{Acknowledgment}

\bibliographystyle{IEEEtran}
\bibliography{references}

\end{document}